\RequirePackage{amsmath}
\documentclass[runningheads]{llncs}
\usepackage{lmodern}
\usepackage[utf8]{inputenc}
\usepackage{latexsym}
\usepackage[T1]{fontenc}
\usepackage{amssymb}
\usepackage{stmaryrd}
\usepackage{eulervm}
\usepackage{palatino}
\usepackage{mdwmath}
\usepackage{mdwtab}

\usepackage{graphicx}
\usepackage[cmyk]{xcolor}
\usepackage{tikz}
\usepackage{tabularx}
\usepackage{array}
\usepackage{eqparbox}
\usepackage{booktabs}

\usepackage{amssymb}
\usepackage{makecell}
\usepackage{diagbox}
\usepackage{mathtools}

\usepackage{array}
\usepackage{lineno}

\usepackage{url}
\usepackage{hyperref}
\usepackage{enumitem}


\newcommand{\pc}{\mathbf{P}}

\newcommand{\pp}{\mathbb{P}}


\newcommand{\weq}{\stackrel{\omega}{=}}
\newcommand{\seq}{\stackrel{s}{=}}





\usepackage{enumitem}




\setcounter{page}{1} 
\title{Algebraic, Topological, and Mereological Foundations of Existential Granules}
\titlerunning{Existential Granules}
\author{\textsf{A Mani}\thanks{This research is supported by Woman Scientist Grant No. WOS-A/PM-22/2019 of the Department of Science and Technology.}}
\authorrunning{A Mani}
\institute{Machine Intelligence Unit, Indian Statistical Institute, Kolkata\\
203, B. T. Road, Kolkata-700108, India\\
Email: \texttt{$a.mani.cms@gmail.com$} \texttt{$amani.rough@isical.ac.in$}\\
Homepage: \url{https://www.logicamani.in}\\
Orcid: \url{https://orcid.org/0000-0002-0880-1035}}

\begin{document}

\maketitle

\begin{abstract}
In this research, new concepts of existential granules that determine themselves are invented, and are characterized from algebraic, topological, and mereological perspectives. Existential granules are those that determine themselves initially, and interact with their environment subsequently. Examples of the concept, such as those of granular balls, though inadequately defined, algorithmically established, and insufficiently theorized in earlier works by others, are already used in applications of rough sets and soft computing. It is shown that they fit into multiple theoretical frameworks (axiomatic, adaptive, and  others) of granular computing. The characterization is intended for algorithm development, application to classification problems and possible mathematical foundations of generalizations of the approach. Additionally, many open problems are posed and directions provided.  
\end{abstract}

\keywords{Existential Granules, Adaptive Granules, Axiomatic Granular Computing, Topological Vector Spaces, Granular Operator Spaces, Granular Balls, Ball K-Means Algorithm, Clean Rough Randomness}

\section{Introduction}

In the philosophy literature,  existentialism is about basic questions of human existence, individuality, and interactions with the environment. In this research, the adjective \emph{existential} is used in relation to self-determination of objects and subsequent transformations in relation to objects of similar type in their environment. Consequently, they are a form of adaptive objects, and not all adaptive objects are existential. 

In the axiomatic approach to granularity \cite{am5586,am501,am240}, granules are typically specified by conditions. Though external generation procedures are not specified in the literature, they can be expected to be mostly compatible with the methodology. In the precision-based approach \cite{ya01,tyl,gll2006}, precision-levels define admissible granules. However, a mere specification of a precision-level, rarely defines a granule or generates one. Even if it does, it is not that the precision level is intrinsic to the granule. Generation procedures can certainly be added to precision-based granules subject to compatibility with the conditions. Adaptive granules \cite{skajsd2016} are expected to adapt to processes, and even they may be generated initially through some procedures. The existential aspect in these are their initial self-determination, and subsequent transformation in response to interactions. These ideas can be defined in other categorical perspectives of the same theories.     

Closed and open balls and spherical surfaces are pretty standard objects in studies on metrizable spaces and related topologies. However, they are not interpreted as such in some empirical machine learning practices, and if so, how are they interpreted? This is one of the problems tackled in this research.

In recent research, some improved algorithms for K-means clustering and classification issues are developed \cite{ballk20,xldw2019,xdwgg2022}. The algorithms are explained\\ through ideas of \emph{granular balls}. However, they are not sufficiently theorized in the mentioned papers, and in others that make use of the concept \cite{cwyml2021,pwxwc2022,qxhq2023,jpzy2023}. The basic assumptions, and possible generalizations of the concept are investigated by the present author here. New concepts of existential granules are proposed as a severe generalization of the concept, and are shown to be compatible with her axiomatic frameworks for granules \cite{am5586,am501,am240}. A somewhat under-specified example of the formation of existential granules in real life contexts is the following:  When law enforcers try to comb a forested area for possible illegal activity with information from drones and other sources, then they typically form multiple teams to cover distinct sub-areas. Based on the result from the combing operation completed, they are likely to redefine the sub-areas to be searched again. The sequence of set of subareas at each stage will stabilize when all relevant areas are checked. The sub-areas may be regarded as granules that transform themselves at each stage of the operation due to new information, and the state of the search operation(s) on other areas. Finally, they become stable at some later stage.

Clean rough randomness as a not-necessarily stochastic or algorithmically randomness concept is recently introduced by the present author \cite{am23b,am23f}, and is capable of modeling many algorithms such as those that operate over entire sets of tolerances. The problem of precisely formalizing the adaptive aspect of the algorithms using related functions is additionally posed.     

The following sections are organized as follows. Some background is provided in the next section. Existential ball K-Means (BKM) algorithms are analyzed, related partial algebras are invented and a soft generalization where the algorithm works is specified in the third section. The granular ball methodology is formalized through a reading of related algorithms in the next. A formal approach to existential granules is invented in the fifth section and the problem of \emph{appropriately} formalizing the BKM algorithm in the rough randomness perspective is formulated. Further directions are considered in the sixth.

\section{Background}
Distances are often intended to model qualitative ideas of being different in numeric terms. Therefore, it is not required to satisfy most conditions typical of a metric in a metric space.
A \emph{distance function} on a set $S$ is a function $\sigma : S^2 \longmapsto \Re_+$ that satisfies 
\begin{equation}
 (\forall a) \sigma(a,a) = 0 \tag{distance}
\end{equation}
The collection $\mathcal{B} = \{B_\sigma (x, r): x\in S \& r>0 \}$ of all $r$-spheres generated by $\sigma$ is a weak base for the topology $\tau_\sigma$ defined by 
\[V\in \tau_\sigma \text{ if and only if } (\forall x \in V \exists r >0) B_\sigma (x, r)\subseteq V\]

Consider the conditions:
\begin{description}
 \item[Identity]{$(\forall a, b)(\sigma (a, b) = 0 \leftrightarrow a=b)$,}
 \item[Symmetry]{$(\forall a, b) \sigma (a, b) = \sigma (b, a)$,}
 \item[Triangle]{$(\forall a, b, c)\, \sigma (a, b) \leq \sigma (a, c)+\sigma (c, b)$, and}
 \item[Pseudo-Identity]{$(\forall a, b)(\sigma (a, b) = 0 \longrightarrow a=b)$.}
\end{description}
$\sigma$ is said to be a metric or semimetric or pseudometric respectively as it satisfies all the four or identity and symmetry or the last three conditions respectively. A quasi-metric is a distance that satisfies the triangle inequality, while a distance that satisfies the triangle inequality up to a constant $k> 0$ (k-triangle: $(\forall a, b, c)\, k\sigma (a, b) \leq \sigma (a, c)+\sigma (c, b)$) is called a weak quasi-metric. Given a distance function on a set, a topology does not automatically follow. This holds as well for semimetrics \cite{mgsn1973}. All generalized metric spaces will be collectively referred to as $*$-metric spaces.

If $x$ is a point in a $*$-metric space $(X, \sigma)$, and $H$ a subset of $X$ then the distance of $x$ from $H$ is given by $\underline{\sigma}(x, H) = \inf \{\sigma(x, a): a\in H\}$. The distance between two subsets $H$ and $F$ can be measured with the Hausdorff distance $\sigma_h$ or the infimal distance $\sigma_I$(these are not metrics): \[\sigma_h(H, F) = \max \{\sup_{x\in H} \underline{\sigma} (x, F), \sup_{x\in F} \underline{\sigma} (H, x) \} \,\&\, \sigma_I(H, F) = \inf \{\sigma(a, b): a\in H, b\in F\}.\] The former is a metric on the set of compact subsets if $\sigma$ is a metric.

\subsection{Topological Vector Spaces}

Some familiarity with topological vector spaces (TVS)\cite{aw1978,bosm2017} will be assumed. 
A $*$-metric vector space is a pair $(X, \sigma)$, with $X$ being a vector space over the real field, and $\sigma$ a $*$-metric such that the operations are jointly continuous (that is if $(x_n) \rightarrow x$ and $(b_n) \rightarrow b$ in $X$, and $(\alpha_n) \rightarrow \alpha$ in $\Re$, then 
$(\alpha_n x_n + b_n) \rightarrow \alpha x + b$.)

Consider the following properties of a function $p: X \longmapsto \Re_+$ for any $x, b\in X$ 
\begin{align*}
p(0) = 0 \tag{PN1}\\
p(x) \geq 0  \tag{PN2}\\
p(-x) = p(x) \tag{PN3}\\
p(x+b) \leq p(x) + p(b) \tag{PN4}\\
\text{Continuity of scalar multiplication}\tag{PN5}\\
\text{If } p(x) = 0 \text{ then } x= 0 \tag{PNT}\\
p(\alpha x) = |\alpha| p(x) \tag{SN1}
\end{align*}
$p$ is said to be a \emph{paranorm} (respectively \emph{seminorm}) if it satisfies PN1-PN5 (respectively PN2,PN4 and SN1). It is \emph{total}, if it satisfies PNT. All seminorms are paranorms, and a total seminorm is a \emph{norm}.

A paranormed space is a pair $(X, p)$ where $p$ is a paranorm over the vector space $X$. It is complete if $(X,\sigma)$ is complete, where $\sigma (a, b) = p(a - b)$. Every pseudometric vector space can be endowed with a paranorm from which it is derived.

It should be noted that $*$-metrics that have nothing to do with any intended topology are sometimes used in ML practice.

\subsection{Partial Algebraic Systems}

For basics of partial algebras, the reader is referred to \cite{bu,lj}.
\begin{definition}
A \emph{partial algebra} $P$ is a tuple of the form \[\left\langle\underline{P},\,f_{1},\,f_{2},\,\ldots ,\, f_{n}, (r_{1},\,\ldots ,\,r_{n} )\right\rangle\] with $\underline{P}$ being a set, $f_{i}$'s being partial function symbols of arity $r_{i}$. The interpretation of $f_{i}$ on the set $\underline{P}$ should be denoted by $f_{i}^{\underline{P}}$, but the superscript will be dropped in this paper as the application contexts are simple enough. If predicate symbols enter into the signature, then $P$ is termed a \emph{partial algebraic system}.   
\end{definition}

In this paragraph the terms are not interpreted. For two terms $s,\,t$, $s\,\stackrel{\omega}{=}\,t$ shall mean, if both sides are defined then the two terms are equal (the quantification is implicit). $\stackrel{\omega}{=}$ is the same as the existence equality (also written as $\stackrel{e}{=}$) in the present paper. $s\,\stackrel{\omega ^*}{=}\,t$ shall mean if either side is defined, then the other is and the two sides are equal (the quantification is implicit). Note that the latter equality can be defined in terms of the former as 
\[(s\,\stackrel{\omega}{=}\,s \, \longrightarrow \, s\,\stackrel{\omega}{=} t)\&\,(t\,\stackrel{\omega}{=}\,t \, \longrightarrow \, s\,\stackrel{\omega}{=} t) \]

Various kinds of morphisms can be defined between two partial algebras or partial algebraic systems of the same or even different types. For two partial algebras of the same type \[X\, =\, \left\langle\underline{X},\,f_{1},\,f_{2},\,\ldots ,\, f_{n} \right\rangle \text{ and } W\, =\, \left\langle\underline{W},\,g_{1},\,g_{2},\,\ldots ,\, g_{n} \right\rangle ,\] a map $\varphi \, :\, X\, \longmapsto\, W$ is said to be a 
\begin{itemize}
\item {\emph{morphism} if for each $i$, \[(\forall (x_1,\, \ldots \, x_k)\,\in \, dom (f_i)) \varphi (f_{i}(x_1 , \ldots , \, x_k))\,=\,  g_i (\varphi(x_1),\, \ldots , \, \varphi (x_k)) \]}
\item {\emph{closed morphism}, if it is a morphism and the existence of\\ $g_{i} (\varphi(x_1),\, \ldots , \, \varphi (x_k))$ implies the existence of $f_{i}(x_1 , \ldots , \, x_k)$.}
\end{itemize}

Usually it is more convenient to work with closed morphisms.

\section{Existential Granular K-Means Algorithm}

The end product of a hard or soft clustering can often be interpreted as a granulation. The so-called ball granular computing \cite{xldw2019} is not properly formalized from a mathematical perspective as the goal of the authors is to stress the performance of their algorithms. Its origin is obviously related to the ball K-means algorithm \cite{ballk20}. A critical analysis with some generalization of the last mentioned method is proposed first after reconsidering the basic assumptions implicit in it.

Let the dataset of points be $V$ that is a subset of the real topological vector space $X$  with pseudometric (or metric) $\sigma$ which in turn is equivalent to a paranorm. $V$ is not usually closed under the algebraic operations induced from $X$. Algebraic closure on real data is often more complex as additional layers of meaning based on bounds or types may be of interest. Some questions that can shape the semantic domain and therefore relevant algebraic models are
\begin{enumerate}
 \item {Should the value of operations beyond $V$ be considered?}
 \item {Are the interpretations of operations over $X$ meaningful for the context of $V$? To what extent should they be permitted? Does the smallest subspace $Alg(V)$ containing $V$ suffice?}
 \item {Should the interpretations of the operations over $X$ be reinterpreted (at least partly) over $V$. This is especially useful when the values or bounds imposed by $V$ are alone meaningful.}
\end{enumerate}
Depending on the answers to these, the appropriate algebraic operations on the balls may be selected and this lead to a natural generalization of algorithm.

The basic steps of the ball k-means algorithm are 
\begin{description}
\item[Subregion]{Form an arbitrary clustering $E_1, E_2, \ldots E_k $ of $V$.}
\item[MCT]{Compute the mean $c_i$ for each subset $E_i$.}
\item[Radius]{Taking the greatest distance among points in $E_i$ from $c_i$ as the radius $r_i$, generate the ball $C_i$ for each $i$. }
\item[Neighbors]{Define the relation $\eta C_j C_i$ (for $C_j$ is a neighbor of $C_i$) if and only if  $\sigma c_i c_j < 2 r_i$. For $C_i$, let $N_{C_i}$ be its set of neighbor balls (granules).}
\item[Stable]{ If $N_{C_i} \neq \emptyset$, then its stable region is defined by
\begin{itemize}
\item {$St_{\sigma}(C_i) = B(c_i, 0.5\min {\sigma(c_i, c): c\in N_{C_i}})$,}
\item {and its active area by $AA(C_i) = C_i \setminus St(C_i)$.}
\end{itemize}}
\item[Annular Regions]{Let $Card(N_C) = k$, then for $i\in[1, k]$, the ith annular region on $C$, $\mathbb{A}_i^C$ is  $\{x: \sigma(c, c_i) < 2 \sigma (x, c) \leq \sigma(c, c_{i+1})\}$ for $i< k$ and $x: \sigma(c, c_i) < 2 \sigma (x, c) \leq r $ for $i=k$.}
\end{description}

The BKM algorithm and related considerations can be extended to any metric TVS.  However, the results cannot be guaranteed for semi-metric spaces or pseudo-metric spaces, in general.

For each $i$, the radius at the first iteration $r_i = Sup\{ \sigma(x, c_i):\, x\in E_i\}$. Given two ball clusters $C_i$ and $C_j$ with centers $c_i$ and $c_j$ respectively at a fixed iteration level, define \[\eta C_j C_i \text{ iff } \sigma (c_i, c_j) < 2 r_i .\]
$\eta$ is a reflexive, non-symmetric relation in general. $C_j$ is a  \emph{neighbor} of $C_i$ if and only if $\eta C_i C_j$
For $C_i$, let $N_{C_i}$ be its set of neighbor balls (granules). If $N_{C_i} \neq \emptyset$, then its stable region is defined by
\[St(C_i) = B(c_i, 0.5\min {\sigma(c_i, c): c\in N_{C_i}}).\] The active area $AA(C_i) = C_i \setminus St(C_i)$.

The term \emph{$i$-closest} is not defined in the paper \cite{ballk20}. It is simply the closest neighbor cluster(s). As it is based on distance between centers, uniqueness cannot be guaranteed. Let $Card(N_C) = k$, then for $i\in[1, k]$, the ith annular region on $C$, $\mathbb{A}_i^C$ is  $\{x: \sigma(c, c_i) < 2 \sigma (x, c) \leq \sigma(c, c_{i+1})\}$ for $i< k$ and $x: \sigma(c, c_i) < 2 \sigma (x, c) \leq r $ for $i=k$.

The ball $k$-means algorithm can be reinterpreted as a granular approximation procedure of an unknown clustering that is supposed to exist. The steps in the approximation being guided by $\eta$, stable regions, and annular regions as proved in Theorem \ref{mix} (\cite{ballk20}). Stable regions may be read as partial lower approximations of the initial granules at that stage. 

\begin{theorem}\label{mix}
\begin{enumerate}
\item {If $C_i$ is a neighbor of $C$, then some non-stable points of $C$ may be moved into $C_i$.}
\item {If $C_j$ is not a neighbor of $C$, then no points of $C$ can be moved into $C_j$}
\item {For a given $C$ with center $c$, and $Card(N_C)= k$, the points in the $i$th ($i\leq k$)  annular space of $C$ can only be moved within the first $i$-closest neighbor clusters and itself.}
\item {If $c_i^{(t)}$ is the center of the ball $C_i$ in the $t$th iteration, then if $\sigma(c_i^{(t-1)}, c_j^{(t-1)}) \geq 2r_i^{(t)} +\sigma (c_i^{(t)},c_i^{(t-1)}) +\sigma (c_j^{(t)},c_j^{(t-1)})$, then $C_j$ cannot be a neighbor ball of $C_i$ in the current iteration. So the computation of the center distance is avoided.}
\end{enumerate}
\end{theorem}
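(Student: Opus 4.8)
The plan is to prove each of the four assertions essentially by unwinding the definitions of neighbor, stable region, active area, and annular region, together with the triangle inequality in the metric TVS. I would treat the statements in order, since the later ones reuse geometric facts established for the earlier ones.

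For (1) and (2), the key observation is that a point $x\in C$ can migrate to a cluster $C_j$ only if $C_j$'s center is at least as close to $x$ as $c$ is, i.e. $\sigma(x,c_j)\le\sigma(x,c)$. First I would show that if $x\in St(C)=B(c,0.5\min\{\sigma(c,c'):c'\in N_C\})$, then for every cluster $C'$ (neighbor or not) we have $\sigma(x,c')\ge\sigma(c,c')-\sigma(x,c)>0.5\min\{\ldots\}\ge\sigma(x,c)$ by the triangle inequality, so no stable point can move; hence only non-stable (active) points are candidates, giving (1). For (2): if $C_j$ is not a neighbor of $C$, then $\sigma(c,c_j)\ge 2r$, where $r$ is the radius of $C$; for any $x\in C$ we have $\sigma(x,c)\le r$, so $\sigma(x,c_j)\ge\sigma(c,c_j)-\sigma(c,x)\ge 2r-r=r\ge\sigma(x,c)$, and in fact the inequality is strict enough (or can be made so by the usual tie-breaking convention) that $x$ is not reassigned to $C_j$. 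I would state explicitly the convention that reassignment requires being strictly closer to the new center, so that the boundary case is handled.

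For (3), the annular region $\mathbb{A}_i^C=\{x:\sigma(c,c_i)<2\sigma(x,c)\le\sigma(c,c_{i+1})\}$ (with the $r$-cap for $i=k$) is designed precisely so that $x\in\mathbb{A}_i^C$ forces $x$ to be, in center-to-center terms, far from all neighbors beyond the $i$-th closest. The argument: order the neighbors $c_1,\ldots,c_k$ by increasing distance from $c$. If $x\in\mathbb{A}_i^C$ and $j>i$, then $\sigma(c,c_j)\ge\sigma(c,c_{i+1})\ge 2\sigma(x,c)$, whence $\sigma(x,c_j)\ge\sigma(c,c_j)-\sigma(x,c)\ge 2\sigma(x,c)-\sigma(x,c)=\sigma(x,c)$, so $x$ is not strictly closer to $c_j$ than to $c$ and cannot move there; combined with (2) (a non-neighbor is even farther), $x$ can only go to $c_1,\ldots,c_i$ or stay. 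I would remark that the doubling constant $2$ is exactly what makes the bound $\sigma(c,c_j)-\sigma(x,c)\ge\sigma(x,c)$ work.

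For (4), this is a pure triangle-inequality bookkeeping step across two iterations. Assuming $\sigma(c_i^{(t-1)},c_j^{(t-1)})\ge 2r_i^{(t)}+\sigma(c_i^{(t)},c_i^{(t-1)})+\sigma(c_j^{(t)},c_j^{(t-1)})$, I would apply the triangle inequality twice to get $\sigma(c_i^{(t)},c_j^{(t)})\ge\sigma(c_i^{(t-1)},c_j^{(t-1)})-\sigma(c_i^{(t)},c_i^{(t-1)})-\sigma(c_j^{(t)},c_j^{(t-1)})\ge 2r_i^{(t)}$, which by the definition of $\eta$ at iteration $t$ means $C_j$ is not a neighbor of $C_i$, so its center distance need not be recomputed. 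The main obstacle throughout is not depth but care: several statements are only true under the natural strict-inequality tie-breaking rule for reassignment and under the assumption that $\sigma$ is a genuine metric (the triangle inequality and $\sigma(a,b)=0\iff a=b$ are both used), which is why the preceding remark restricts the result to metric TVS; I would flag each use of these hypotheses rather than risk an unjustified boundary case.
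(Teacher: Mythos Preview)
Your argument is correct and is exactly the intended one: each part reduces to one or two applications of the triangle inequality together with the definitions of neighbor, stable region, and annular region, and your handling of the boundary cases via a strict-inequality reassignment convention is the right way to make (2) and (3) airtight.

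Note, however, that the paper does not supply its own proof of this theorem at all: it is stated as a quotation of the results proved in \cite{ballk20}, and the surrounding text merely records that the BKM update is ``guided by $\eta$, stable regions, and annular regions as proved in Theorem~\ref{mix} (\cite{ballk20}).'' So there is nothing in the paper to compare your argument against; what you have written is essentially a reconstruction of the original ball $k$-means proofs, and it is sound. One small remark: in your treatment of (1), the clause ``for every cluster $C'$ (neighbor or not)'' is justified, but for non-neighbors it is cleaner to invoke (2) directly rather than squeeze the bound through $0.5\min\{\sigma(c,c'):c'\in N_C\}$, since the latter only controls distances to neighbor centers; the conclusion is the same, but separating the two cases makes the logic more transparent.
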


It can additionally be proved that

\begin{theorem}
If $X$ is a paranormed TVS, then the BKM algorithm terminates in a soft clustering. 
\end{theorem}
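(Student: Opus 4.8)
The plan is to argue termination in the classical K-means style, by exhibiting a real-valued potential that is monotone along the iterations, together with the finiteness of the space of configurations. Since $V \subseteq X$ is a finite data set, there are only finitely many soft clusterings $(E_1, \ldots, E_k)$ of $V$ (finitely many functions $V \longmapsto 2^{\{1,\ldots,k\}}$, or into $\{1,\ldots,k\}$ in the hard case), and the update rule — compute means, form balls, determine neighbours, stable and active areas, reassign active points — is deterministic once a tie-breaking convention for \emph{$i$-closest} is fixed. Hence the iteration, viewed as a self-map of this finite set, is eventually periodic, and it suffices to rule out periods of length $\geq 2$; for that a potential that strictly decreases on genuine changes is enough. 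I would take the within-cluster dispersion functional $\Phi(\mathcal{E}, (c_i)) = \sum_{i}\sum_{x\in E_i} p(x - c_i)^2$, noting first that the means $c_i^{(t)} = |E_i^{(t)}|^{-1}\sum_{x\in E_i^{(t)}} x$ exist because $X$ is a real vector space and the clusters are finite.

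Second, the reassignment step must be shown not to increase $\Phi$ for fixed centers. Points lying in a stable region $St(C_i)$ are never moved (Theorem \ref{mix}, items 1--2), so their contribution is untouched, while points in an active area $AA(C_i)$ are moved only into neighbour clusters and, via the annular-region analysis (Theorem \ref{mix}, item 3), only into a cluster whose centre is at least as close, and strictly closer whenever an actual move occurs. Here PN4 is exactly what makes $\sigma = p(\,\cdot - \cdot\,)$ a genuine pseudometric obeying the triangle inequality, so ``closest centre'' is meaningful and the per-point squared distances behave as needed; consequently $\Phi$ strictly drops whenever the clustering properly changes and is unchanged otherwise.

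The crux, and the step I expect to be the real obstacle, is the centre-recomputation (MCT) step: one needs $\sum_{x\in E} p(x - \bar{x}_E)^2 \leq \sum_{x\in E} p(x - c)^2$ for every $c$, i.e. that the vector mean minimises the within-cluster squared dispersion. In an inner-product space this is the familiar bias--variance/parallelogram identity $\sum_{x\in E}p(x-c)^2 = \sum_{x\in E}p(x-\bar{x}_E)^2 + |E|\,p(\bar{x}_E - c)^2$, which drives ordinary K-means; for a bare paranorm $p$ it can fail, as small $\ell^1$-type examples with three points show. So the honest route is to isolate the property of $p$ actually used: either (a) assume the paranorm is Hilbertian — derived from an inner product, equivalently satisfying the parallelogram law — in which case the displayed identity gives the inequality outright; or (b) replace $\Phi$ by a cost for which the mean is provably non-worsening under PN2--PN5, exploiting convexity/Jensen when $p$ is a seminorm and pushing $p^2$ through via continuity of scalar multiplication (PN5) and the joint continuity of the operations. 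Once MCT and reassignment both fail to increase $\Phi$, strict monotonicity on changes plus finiteness forces the iteration to a fixed configuration; that terminal configuration is, by construction, a family of (typically overlapping) balls covering $V$, i.e. a soft clustering, which finishes the argument. In the write-up I would state explicitly which hypothesis on $p$ is being invoked, since this is precisely where ``paranormed TVS'' is carrying more weight than a general paranorm can bear.
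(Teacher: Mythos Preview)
The paper states this theorem without proof, so there is no argument of the paper's to compare against. Your Lloyd-style strategy---a monotone potential $\Phi$ plus finiteness of the configuration space---is the natural route, and your own diagnosis of its weak point is accurate: the step in which recomputing centres as arithmetic means does not increase $\Phi(\mathcal{E},(c_i)) = \sum_i\sum_{x\in E_i} p(x-c_i)^2$ genuinely fails for a bare paranorm (and already for non-Hilbertian norms), exactly as your $\ell^1$-type examples show. Without that step the finiteness argument yields only eventual periodicity, not a fixed point, and nothing in the algorithm as described excludes a cycle of length $\geq 2$ in which the reassignment and recentring steps undo one another.

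So the gap you flag is real, and it is a gap in the theorem as stated rather than merely in your write-up. A defensible version would either (i) strengthen the hypothesis to an inner-product structure (so the parallelogram identity gives the mean as the unique minimiser of $\Phi$ for fixed assignments), or (ii) alter the MCT step to take an actual minimiser of the chosen cost under the given paranorm, with existence argued separately. Your option (b)---keeping the arithmetic mean but changing $\Phi$---does not obviously work: for a general paranorm there is no evident convex cost whose minimiser over $X$ is the arithmetic mean. If you must keep the statement verbatim, say explicitly that the paper offers no proof and that the claim appears to need more than a paranormed TVS.
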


\subsection{Partial Algebras for BKM Variants}\label{bkmv} 

Let $B_r^V(c)$ denote the closed ball $\{x: \,x\in V, \&\, \sigma(x, c) \leq r \}$ with center $c$ and radius $r$ over $V$ (it will also be referred to as the \emph{cautious closed ball}), then the following algebraic partial/total operations are definable for any $a, b\in B_r^V(c)$ and any $\alpha, \beta \in \Re$

\[\alpha a\ovee \beta b= 
\begin{dcases}
    \alpha a + \beta b,& \text{if } \alpha a, \beta b,  \alpha a+ \beta b \in B_r^V(c)\\
    \text{undefined},              & \text{otherwise}
\end{dcases}
\]

On the closed ball $B_r^X(c) = \{x: \,x\in X, \&\, \sigma(x, c) \leq r \}$ too, a similar operation $\oplus$ may be interpreted (relative to the operations on $X$).

\[\alpha a\oplus \beta b= 
\begin{dcases}
    \alpha a\oplus \beta b,& \text{if } \alpha a+ \beta b \in B_r^X(c)\\
    \text{undefined},              & \text{otherwise}
\end{dcases}
\]

\begin{theorem}\label{ballalg}
In the above context, $B_r^X(c)$  satisfies 
\begin{align}
a\oplus b \seq b \oplus a   \tag{weak* comm}\\
a\oplus (b \oplus c) \weq (a\oplus b) \oplus c   \tag{weak assoc}\\
\alpha (\beta a) \weq (\alpha \beta)a   \tag{weak scal1}\\
\alpha a\oplus \beta a \seq (\alpha + \beta)a   \tag{weak* scal2}\\
a \oplus 0 \seq 0 \oplus a \tag{weak* 0}\\
(\forall a,b, c)(a\oplus b = 0 = a\oplus c \longrightarrow b=c) \tag{inverse}
\end{align}
\end{theorem}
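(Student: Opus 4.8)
The plan is to reduce each displayed identity to the corresponding \emph{unconditional} identity in the ambient vector space $X$ — associativity and commutativity of $+$, neutrality of $0$, uniqueness of additive inverses, $\alpha(\beta a)=(\alpha\beta)a$, and $\alpha a+\beta a=(\alpha+\beta)a$ — and then to do the genuine work, which is purely a comparison of \emph{domains of definition} of the partial operation $\oplus$ (and of the scalar action restricted so as to stay in $B_r^X(c)$). Note first that none of this uses $\sigma$ or any topology: only the vector space structure of $X$ and the set-theoretic definition of the closed ball matter. By construction $u\oplus v$ evaluates to $u+v$ exactly when $u+v\in B_r^X(c)$ and is undefined otherwise; crucially, $\oplus$ imposes \emph{only} the final constraint $\alpha a+\beta b\in B_r^X(c)$, not membership of the scaled summands $\alpha a,\beta b$ individually. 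So the theorem is an exercise in checking when the two sides ``land in the ball''.

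First I would dispatch the three $\seq$-claims by showing the two sides have literally the same domain. For \textnormal{weak* comm}, $a\oplus b$ is defined iff $a+b\in B_r^X(c)$ iff $b+a\in B_r^X(c)$ iff $b\oplus a$ is defined, and then both equal $a+b$. For \textnormal{weak* scal2}, $\alpha a\oplus\beta a$ is defined iff $\alpha a+\beta a\in B_r^X(c)$, which, since $\alpha a+\beta a=(\alpha+\beta)a$ in $X$, is exactly the condition for $(\alpha+\beta)a$ to be in the ball; the values then agree. (This is precisely where it matters that $\oplus$ does not separately demand $\alpha a,\beta a\in B_r^X(c)$.) For \textnormal{weak* 0}, whenever either side is meaningful we have $0\in B_r^X(c)$ and $a\in B_r^X(c)$, so $a\oplus 0=a+0=a=0+a=0\oplus a$. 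Finally, \textnormal{inverse} is immediate: if $a\oplus b=0$ and $a\oplus c=0$ then $a+b=0=a+c$ in $X$, hence $b=-a=c$ by uniqueness of inverses in the group $(X,+)$.

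For the two $\weq$-claims I would exhibit that the left domain is in general a proper subset of, or incomparable to, the right, so only existence equality is available. In \textnormal{weak assoc}, $a\oplus(b\oplus c)$ requires both $b+c\in B_r^X(c)$ and $a+b+c\in B_r^X(c)$, whereas $(a\oplus b)\oplus c$ requires $a+b\in B_r^X(c)$ and $a+b+c\in B_r^X(c)$; these constraint families need not coincide, but when both sides are defined, associativity in $X$ gives the common value $a+b+c$. Likewise $\alpha(\beta a)$ is defined iff $\beta a\in B_r^X(c)$ \emph{and} $\alpha\beta a\in B_r^X(c)$, while $(\alpha\beta)a$ needs only $\alpha\beta a\in B_r^X(c)$; when both are defined they agree by $\alpha(\beta a)=(\alpha\beta)a$ in $X$.

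The calculations are routine; the main point requiring care is the definedness bookkeeping — in particular explaining exactly why \textnormal{weak assoc} and \textnormal{weak scal1} cannot be promoted from $\weq$ to $\seq$ (the intermediate subterms $b\oplus c$, $a\oplus b$, and $\beta a$ introduce extra ``stay in $B_r^X(c)$'' constraints absent on the other side), why \textnormal{weak* scal2} \emph{can} be stated with $\seq$ (because $\oplus$ tests only the final sum), and noting that any clause mentioning $0$ is read under the tacit hypothesis $0\in B_r^X(c)$, i.e. $\sigma(0,c)\le r$, since otherwise $0$ is not an admissible operand of $\oplus$.
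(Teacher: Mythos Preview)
Your proposal is correct and follows the same approach as the paper: reduce each identity to the corresponding unconditional identity in the ambient vector space $X$ and then track which side(s) are defined. The paper's own proof is only a two-line sketch (it notes that the $\seq$-claims need one side defined, that weak* commutativity is obvious, and that weak associativity cannot be strengthened because $a\oplus b$ may fail while $a\oplus(b\oplus c)$ succeeds), so your write-up is considerably more detailed than what the paper supplies, but the underlying argument is identical.
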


\begin{proof}
The weak versions of the equalities hold when both sides are defined, while the stronger version ($\seq$) require any one of the sides to be defined.
Weak* commutativity is obvious. Weak associativity holds, and its stronger version does not because $a \oplus b $ may not be defined, even though $a\oplus (b \oplus c)$ is. 
\end{proof}

\begin{theorem}\label{cballalg}
In the above context, $B_r^V(c)$ satisfies $dom(\ovee) \subset dom(\oplus)$ and
\begin{align}
a\ovee b \seq b \ovee a   \tag{weak* comm}\\
a\ovee (b \ovee c) \weq (a\ovee b) \ovee c   \tag{weak assoc}\\
\alpha (\beta a) \weq (\alpha \beta)a   \tag{weak scal1}\\
\alpha a\ovee \beta a \seq (\alpha + \beta)a   \tag{weak* scal2}\\
a \ovee 0 \seq 0 \ovee a \tag{weak* 0}\\
(\forall a,b, c)(a\ovee b = 0 = a\ovee c \longrightarrow b=c) \tag{inverse}
\end{align}
\end{theorem}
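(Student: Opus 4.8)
The plan is to exploit the fact that $\ovee$ is just $\oplus$ further constrained so that the scaled arguments and the result all remain inside $V$; Theorem~\ref{ballalg} then carries most of the load. First I would record $B_r^V(c)=B_r^X(c)\cap V$, so that $B_r^V(c)\subseteq B_r^X(c)$. Reading off the definitions, if $\alpha a\ovee\beta b$ is defined then $\alpha a$, $\beta b$ and $\alpha a+\beta b$ all lie in $B_r^V(c)\subseteq B_r^X(c)$; in particular $\alpha a+\beta b\in B_r^X(c)$, so $\alpha a\oplus\beta b$ is defined and equal to it. This yields both $dom(\ovee)\subseteq dom(\oplus)$ and the coincidence of the two partial operations throughout $dom(\ovee)$. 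The inclusion is strict whenever $B_r^V(c)\subsetneq B_r^X(c)$ --- automatic for any data set $V$, and more generally whenever $V$ omits a point $z$ of the ambient ball: then $1\cdot z\oplus 0\cdot z=z$ is in $dom(\oplus)$ but not in $dom(\ovee)$, since $z$ is not even in the carrier $B_r^V(c)$.

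Next I would dispatch the six laws. A purely weak identity ($\weq$) survives passage to a subdomain on which the two operations agree, so weak assoc and weak scal1 --- and the ``both sides defined'' half of every other line --- are immediate from Theorem~\ref{ballalg}. For the strong ($\seq$) clauses I would argue directly, the recurring point being that the membership test defining $\ovee$ is symmetric in exactly the way each law needs. Weak* comm: $\alpha a+\beta b\in B_r^V(c)$ iff $\beta b+\alpha a\in B_r^V(c)$, and the single-term conditions are shared, so either side being defined forces the other. Weak* $0$: definedness of $a\ovee 0$ (or of $0\ovee a$) yields $0\in B_r^V(c)$, after which both sides collapse to $a$. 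Weak* scal2: definedness of $\alpha a\ovee\beta a$ already supplies $(\alpha+\beta)a=\alpha a+\beta a\in B_r^V(c)$, exactly what is needed for $(\alpha+\beta)a$ to count as defined, so the strong equality holds in that direction; this is the clause where one must be explicit about which hypotheses are in force, since the single-term conditions $\alpha a,\beta a\in B_r^V(c)$ are now part of the left side's definedness and had no analogue in the $B_r^X(c)$ case. The inverse law is inherited verbatim from cancellation in the ambient vector space: $a\ovee b=0$ forces $a+b=0$, i.e. $b=-a$ in $X$, and likewise $c=-a$, hence $b=c$. Finally, the strong forms of weak assoc and weak scal1 fail by the same device used in Theorem~\ref{ballalg}, now with the extra slack that an intermediate sum may leave $V$ rather than merely leave the metric ball.

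The hard part will be bookkeeping rather than ideas. One must keep straight which of the constituent scalar multiplications and additions are themselves partial operations of the algebra carried by $B_r^V(c)$ --- a term like $(\alpha+\beta)a$ is ``defined'' only when its value lies in $B_r^V(c)$ --- and then, clause by clause, check that definedness propagates in the direction asserted. The scalar law is the one place where a hasty argument would overreach, because there definedness flows from the $\ovee$-side to the scalar-multiplication side and not back; separating that asymmetry from the genuinely two-sided laws (weak* comm, weak* $0$) and from the purely vector-space law (inverse) is, I expect, the only delicate point in the proof.
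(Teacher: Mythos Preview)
Your proposal is correct and, in fact, considerably more detailed than what the paper supplies: the paper states Theorem~\ref{cballalg} immediately after Theorem~\ref{ballalg} with no separate proof, evidently intending it to follow by the same reasoning applied to the more restrictive domain. Your strategy of reading $\ovee$ as $\oplus$ restricted to $B_r^V(c)$, inheriting the weak identities wholesale, and then checking the strong ($\seq$) clauses by hand against the extra membership constraints $\alpha a,\beta b\in B_r^V(c)$ is exactly the intended route; your observation that weak* scal2 is the delicate case---because definedness of $(\alpha+\beta)a$ does not by itself force $\alpha a,\beta a\in B_r^V(c)$---goes beyond anything the paper makes explicit.
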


\section{The Granular Ball Methodologies}

A version of the granular ball methods for classification can be found in the preprint \cite{xdwgg2022}. Readers are left wondering whether a norm is even being used (Eqn 2 in page 3), while the exact partitioning of parent balls into child balls (in Definition 1) is impossible. However, the algorithm is relatively clear, and examples for the intent of Definition 1 are provided. Earlier versions have additional mathematical issues (see the discussion at
\url{https://pubpeer.com/publications/4354287243FC39A66DD432BC41046B}).\\ The methods for adaptive granules involves quality checks based on purity of granules (relative to proportion of labels), and heterogeneity of overlap of granules. Otherwise, the essential methods are variations of the one used in the ball K-means algorithm.

\begin{enumerate}
\item {The data set may be partly labelled.}
\item {Regard whole data set or a sample $V$ as a closed sphere  with center $c = \frac{1}{n}\sum (x_i)$ and radius as $r = \frac{1}{n} \sum \sigma(x_i, c)$.} 
\item {Split the current granular ball into k sub balls using ball K-means. It should be noted that splitting is not a partitioning operation.}
\item {Check quality of granular Balls through simple purity measures based on ratio of majority label.}
\item {Stop if the purity measure is OK. Otherwise, repeat the process on the balls derived. } 
\end{enumerate}

The adaptive version is similar but with further stages of splitting whenever a granular ball at the current iteration has heterogeneous overlap with another granular ball, and an accelerated granular ball generation process. Child balls and parent balls are further used in the quality checks, and the ball K-means algorithm is avoided.

\subsection{Fixing the Mathematics}

From the intent of definition 1 (of parent and child balls), it is clear to the present author that the \emph{real data points within a ball} are confused with the ball. A mathematical way of correcting can be through differential geometry, and at least concepts of orientation are essential. The code and algorithm are however based on distances from centers, and labels. A minimal fix that avoids the geometry is the following:

\begin{definition}
Let $V$ be a finite subset of a real normed finite dimensional TVS $X$, and $B_r^V(c)$ be a ball, and $\{B_{r_i}^V(c_i)\}$ a finite sequence of $n$ number of balls, all interpreted in $V$ (with centers in $X$) then $B_r^V(c)$ is a \emph{major ball} and $\{B_{r_i}^V(c_i)\}$ are \emph{minor balls} if and only if the following holds:
\begin{align*}
\bigcup_i B_{r_i}^V(c_i) = \{x: x\in B_{r_i}^V(c_i) \text{ for any }i \} = B_r^V(c) \tag{sum}\\
\bigcap_i \, B_{r_i}^V(c_i)  = \emptyset  \tag{collectionwise disjoincy}
\end{align*}
\end{definition}

This definition makes no sense when $B_r^V(c) = B_r^X(c)$ (and actually under much weaker conditions). Further, minor/child balls should rather be pairwise disjoint (for any $i\neq j \, B_{r_i}^V(c_i) \cap B_{r_j}^V(c_j) = \emptyset$) in Definition 1 of \cite{xdwgg2022}.

\section{Existential Granulations}

To accommodate multiple nonequivalent concepts of granules, and granulations, a loose definition of \emph{existential granule} is proposed first. 
Suppose $X = \left\langle \underline{X}, \mathcal{F}, \mathcal{G} \right\rangle$ is a triple with $\underline{X}$ being a set, $\mathcal{F}$ a mathematical structure on it, and $\mathcal{G} \subseteq \wp (\underline{X})$ a granulation on it. A granule $G\in \mathcal{G} $ will be said to be \emph{existential} if and only if there exists a subset $E$ of $G$ and an operator $\Game: \wp(X) \longmapsto \wp(X)$ such that $G= \Game (E)$ and $(\exists n\geq 1) \Game ^{n+1}(E) = \Game^{n}(E)$.
A granulation is existential, if it is a collection of existential granules determined by the features encoded by its constituent points. That is, it is essentially self-determining up to a point.  It is possible to argue on this concept being existential in numerous ways -- it is exactly the reason for naming it \emph{existential} as opposed to \emph{self-determined}. The idea of non-crisp granules is known in both the axiomatic, precision-based and adaptive theories of granularity. Existential granules have a precise generation aspect motivated by the problem of reducing computational load. Formalization in the axiomatic abstract perspective requires an additional closure operator as defined below, while computational aspects require further specialization. The definitions below are relatively more convenient for abstract approaches \cite{am5586,am501,am240}. \emph{The main questions of this approach are about formalizing the known applications, representing the operation $\Game$, and suitability of the restrictions of admissible granules}. 

\begin{definition}\label{partapp}
An \emph{high mereological approximation Space} (\textsf{mash}) $\mathbb{S}$ is a partial algebraic system  of the form $\mathbb{S} \, =\, \left\langle \underline{\mathbb{S}}, l , u, \pc, \leq , \vee,  \wedge, \bot, \top \right\rangle$ with $\underline{\mathbb{S}}$ being a set, $l, u$ being operators $:\underline{\mathbb{S}}\longmapsto \underline{\mathbb{S}}$ satisfying the following ($\underline{\mathbb{S}}$ is replaced with $\mathbb{S}$ if clear from the context. $\vee$ and $\wedge$ are idempotent partial operations and $\pc$ is a binary predicate.):
\begin{align*}
(\forall x) \pc xx \tag{PT1}\\
(\forall x, b) (\pc xb \, \&\, \pc bx \longrightarrow x = b) \tag{PT2}\\
(\forall a, b) a\vee b \stackrel{\omega}{=} b\vee a  ; \;  (\forall a, b) a\wedge b \stackrel{\omega}{=} b\wedge a \tag{G1}\\
(\forall a, b) (a\vee b) \wedge a \stackrel{\omega}{=} a  ; \;  (\forall a, b) (a\wedge b) \vee a \stackrel{\omega}{=} a \tag{G2}\\
(\forall a, b, c) (a\wedge b) \vee c \stackrel{\omega}{=} (a\vee c) \wedge (b\vee c) \tag{G3}\\
(\forall a, b, c) (a\vee b) \wedge c \stackrel{\omega}{=} (a\wedge c) \vee  (b\wedge c) \tag{G4}\\
(\forall a, b) (a\leq b \leftrightarrow a\vee b = b \,\leftrightarrow\, a\wedge b = a  ) \tag{G5}\\
(\forall a \in \mathbb{S})\,  \pc a^l  a\,\&\,a^{ll}\, =\,a^l \,\&\, \pc a^{u}  a^{uu}  \tag{UL1}\\
(\forall a, b \in \mathbb{S}) (\pc a b \longrightarrow \pc a^l b^l \,\&\,\pc a^u  b^u) \tag{UL2}\\
\bot^l\, =\, \bot \,\&\, \bot^u\, =\, \bot \,\&\, \pc \top^{l} \top \,\&\,  \pc \top^{u} \top  \tag{UL3}\\
(\forall a \in \mathbb{S})\, \pc \bot a \,\&\, \pc a \top    \tag{TB}
\end{align*}
\end{definition}

In a \emph{high general granular operator space} (\textsf{GGS}), defined below, aggregation and co-aggregation operations ($\vee, \,\wedge$) are conceptually separated from the binary parthood ($\pc$), and a basic partial order relation ($\leq$). Parthood is assumed to be reflexive, antisymmetric, and not necessarily transitive. It may satisfy additional generalized transitivity conditions in many contexts. Real-life information processing often involves many non-evaluated instances of aggregations (fusions), commonalities (conjunctions) and implications because of laziness or supporting  metadata or for other reasons  -- this justifies the use of partial operations. Specific versions of a \textsf{GGS} and granular operator spaces have been studied in the research paper \cite{am501}. Partial operations in \textsf{GGS} permit easier handling of adaptive granules \cite{skajsd2016} through morphisms-- concrete methods need to use the frameworks of clear rough random functions. Note further that it is not assumed that $\pc a^{uu} a^u$. The universe $\underline{\mathbb{S}}$ may be a set of collections of attributes, labeled or unlabeled objects among other things. A \emph{high general existential granular operator space} (\textsf{eGGS}) can be obtained from a GGS by simply restricting the predicate $\gamma$ as follows:
\[\gamma x \text{ if and only if } x\in \mathcal{G} = \Im(\Game)\]
\begin{definition}\label{gfsg}
A \emph{High General Granular Operator Space} (\textsf{GGS}) $\mathbb{S}$ is a partial algebraic system  of the form \[\mathbb{S} \, =\, \left\langle \underline{\mathbb{S}}, \gamma, l , u, \pc, \leq , \vee,  \wedge, \bot, \top \right\rangle\] with $\mathbb{S} \, =\, \left\langle \underline{\mathbb{S}}, l , u, \pc, \leq , \vee,  \wedge, \bot, \top \right\rangle$ being a \textsf{mash}, $\gamma$ being a unary predicate that determines $\mathcal{G}$ (by the condition $\gamma x$ if and only if $x\in \mathcal{G}$) 
an \emph{admissible granulation}(defined below) for $\mathbb{S}$. Further, $\gamma x$ will be replaced by $x \in \mathcal{G}$ for convenience.
Let $\pp$ stand for proper parthood, defined via $\pp ab$ if and only if $\pc ab \,\&\,\neg \pc ba$). A granulation is said to be admissible if there exists a term operation $t$ formed from the weak lattice operations such that the following three conditions hold:
\begin{align*}
(\forall x \exists
x_{1},\ldots x_{r}\in \mathcal{G})\, t(x_{1},\,x_{2}, \ldots \,x_{r})=x^{l} \\
\tag{Weak RA, WRA} \mathrm{and}\: (\forall x\, \exists
x_{1},\,\ldots\,x_{r}\in \mathcal{G})\,t(x_{1},\,x_{2}, \ldots \,x_{r}) =
x^{u},\\
\tag{Lower Stability, LS}{(\forall a \in
\mathcal{G})(\forall {x\in \underline{\mathbb{S}}) })\, ( \pc ax\,\longrightarrow\, \pc ax^{l}),}\\
\tag{Full Underlap, FU}{(\forall
x,\,a \in\mathcal{G} \exists
z\in \underline{\mathbb{S}}) \, \pp xz,\,\&\,\pp az\,\&\,z^{l}\, =\,z^{u}\, =\,z,}
\end{align*}
\end{definition}
\begin{definition}\label{egfsg}
A \emph{High General Existential Granular Operator Space} (\textsf{eGGS}) $\mathbb{S}$ is a GGS in which the predicate $\gamma$ is replaced by a unary operation $\Game$ that satisfies  
\begin{align*}
\gamma x \text{ if and only if } x\in \mathcal{G} = \Im(\Game) \tag{G1}\\
(\forall x)(\exists n \geq 1) \Game^{n+1}(x) = \Game^{n} (x) \tag{G2}  
\end{align*}
\end{definition}
Existential granular versions of the following particular classes can be defined by analogy. 
\begin{definition}
\begin{itemize}
\item {In the above definition, if the anti-symmetry condition \textsf{PT2} is\\ dropped, then the resulting system will be referred to as a \emph{Pre-GGS}. If the restriction $\pc a^l  a$  is removed from \textsf{UL1} of a \textsf{pre-GGS}, then it will be referred to as a \emph{Pre*-GGS}.}
\item {In a \textsf{GGS} (resp \textsf{Pre*-GGS}), if the parthood is defined by $\pc ab$ if and only if $a \leq b$ then the \textsf{GGS} is said to be a \emph{high granular operator space} \textsf{GS} (resp. \textsf{Pre*-GS)}.}
\item {A \emph{higher granular operator space} (\textsf{HGOS}) (resp \textsf{Pre*-HGOS}) $\mathbb{S}$ is a \textsf{GS} (resp \textsf{Pre*-GS}) in which the lattice operations are total.}
\item {In a higher granular operator space, if the lattice operations are set theoretic union and intersection, then the \textsf{HGOS} (resp. Pre*-HGOS) will be said to be a \emph{set HGOS} (resp. \emph{set Pre*-HGOS}). In this case, $\underline{\mathbb{S}}$ is a subset of a power set, and the partial algebraic system reduces to $\mathbb{S} \, =\, \left\langle \underline{\mathbb{S}}, \gamma, l , u, \subseteq , \cup,  \cap, \bot, \top \right\rangle$ with $\underline{\mathbb{S}}$ being a set, $\gamma$ being a unary predicate that determines $\mathcal{G}$ (by the condition $\gamma x$ if and only if $x\in \mathcal{G}$). Closure under complementation is not guaranteed in it. }
\end{itemize}
\end{definition}

\subsection{Clean Rough Randomness and Models of Algorithms}

Some essential aspects of clean rough randomness \cite{am23b,am23f} are repeated for convenience, and the problem of formalizing the studied algorithms is in the perspective is formulated. 

Many types of randomness are known in the literature. Stochastic randomness, often referred to as randomness, is often misused without proper justification. In the paper \cite{ank1986}, a phenomenon is defined to be \emph{stochastically random} if it has probabilistic regularity in the absence of other types of regularity. In this definition, the concept of regularity may be understood as \emph{mathematical regularity} in some sense. Generalizations of mathematical probability theory through hybridization with rough sets from a stochastic perspective are explained in the book \cite{bliu2004}. This approach is not ontologically consistent with pure rough reasoning or explainable AI as its focus is on modeling the result of numeric simplifications in a measure-theoretic context. 

Empirical studies show that humans cannot estimate measures of stochastic randomness and weakenings thereof in real life properly \cite{lbg1994}. This is consistent with the observation that connections in the rough set literature between specific versions of rough sets and subjective probability theories (Bayesian or frequentist) are not good approximations. In fact,  rough inferences are grounded in some non-stochastic comprehension of attributes (their relation with the approximated object in terms of number or relative quantity and quality) \cite{amedit,ppm2}.

The idea of \emph{rough randomness} is defined by the present author \cite{am23b} as follows:\emph{a phenomenon is clean roughly random (C-roughly random) if it can be modeled by general rough sets or a derived process thereof}. In concrete situations, such a concept should be realizable in terms of C-roughly random functions or predicates defined below (readers should note that any one of the concepts of rough objects in the literature \cite{am501} such as \emph{a non crisp object} or a \emph{pair of definite objects of the form $(a, b)$ satisfying} $\pc ab$ among others are permitted): 

\begin{definition}\label{roran1}
Let $\mathcal{A}_\tau$ be a collection of approximations of type $\tau$, and $E$ a collection of rough objects defined on the same universe $S$, then by a \emph{C-rough random function of type-1} (\textsf{CRRF1}) will be meant a partial function \[\xi : \mathcal{A}_\tau \longmapsto E .\] 
\end{definition}

\begin{definition}\label{roran2}
Let $\mathcal{A}_\tau$ be a collection of approximations of type $\tau$, $\mathcal{S}$ a subset of $\wp(S)$, and $\Re$ the set of reals, then by a \emph{C-rough random function of type-2} (\textsf{CRRF2}) will be meant a function \[\chi : \mathcal{A}_\tau \times \mathcal{S}\longmapsto \Re .\] 
\end{definition}

\begin{definition}\label{roran3}
Let $\mathcal{A}_\tau$ be a collection of approximations of type $\tau$, and $F$ a collection of objects defined on the same universe $S$, then by a \emph{C-rough random function of type-3} (\textsf{CRRF3}) will be meant a function \[\mu : \mathcal{A}_\tau \longmapsto F .\] 
\end{definition}

\begin{definition}\label{hroran}
Let $\mathcal{O}_\tau$ be a collection of approximation operators of type $\tau_l$ or $\tau_u$, and $E$ a collection of rough objects defined on the same universe $S$, then by a \emph{C-rough random function of type-H} (\textsf{CRRFH}) will be meant a partial function \[\xi : \mathcal{O}_\tau \times \wp(S) \longmapsto E .\] 
\end{definition}

It is obvious that a CRRF1 and CRRF2 are independent concepts, while a total CRRF1 is an CRRF3, and CRRFH is distinct (though related to CRRF3). The set of all such functions will respectively be denoted by $CRRF1(S, E, \tau)$,  $CRRF2(S, \Re , \tau)$, $CRRF3(S, F, \tau)$, and $CRRFH(S, E, \tau)$. For detailed examples, the reader is referred to the earlier papers \cite{am23b,am23f}

\begin{example}\label{ex1}
Let $S$ be a set with a pair of lower ($l$) and upper ($u$)  approximations satisfying (for any $a, b, x \subseteq S$)
\begin{align}
x^l \subseteq x^u \tag{int-cl}\\
x^{ll} \subseteq x^l\tag{l-id}\\
a\subseteq b \longrightarrow a^l \subseteq b^l \tag{l-mo}\\
a\subseteq b \longrightarrow a^u \subseteq b^u \tag{u-mo}\\
\emptyset^l = \emptyset \tag{l-bot}\\
S^u = S\tag{u-top}
\end{align}
The above axioms are minimalist, and most general approaches satisfy them.

In addition, let 
\begin{align}
\mathcal{A}_\tau = \{x : (\exists a \subseteq S)\, x= a^l \text{ or }  x= a^u  \tag{1}\\
E_1 = \{(a^l, a^u):\, a\in S\} \tag{E1}\\
F = \{a: \, a\subseteq S \& \neg \exists b b^l = a \vee b^u = a\} \tag{E0}\\
E_2 = \{b: b^u = b \& b\subseteq S\}\tag{E2}\\
\xi_1 (a) = (a, b^u) \text{ for some } b\subseteq S\tag{xi1}\\
\xi_2 (a) = (b^l, a ) \text{ for some } b\subseteq S \tag{xi2}\\
\xi_3 (a) = (e, f) \in E_1 \,\&\, e= a \text{ or } f= a \tag{xi3}
\end{align}

$E_1$ in the above is a set of rough objects, and a number of algebraic models are associated with it \cite{am501}. A partial function $f: \mathcal{A}_\tau\longmapsto E_1$ that associates $a\in \mathcal{A}_\tau$ with a minimal element of $E_1$ that covers it in the inclusion order is a CRRF of type 1. For general rough sets, this CRRF can be used to define algebraic models and explore duality issues \cite{am5019}, and for many cases associated these are not investigated. A number of similar maps with value in understanding models \cite{amedit} can be defined. Rough objects are defined and interpreted in a number of other ways including $F$ or $E_2$.

Conditions \textsf{xi1-xi3} may additionally involve constraints on $b$, $e$ and $f$. For example, it can be required that there is no other lower or upper approximation included between the pair or that the second component is a minimal approximation covering the first. It is easy to see that 

\begin{theorem}
$\xi_i$ for $i=1, 2, 3$ are CRRF of type-1. 
\end{theorem}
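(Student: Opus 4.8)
The plan is to verify, for each $i\in\{1,2,3\}$, the two requirements of Definition \ref{roran1}: that the domain is a collection $\mathcal{A}_\tau$ of approximations of a fixed type, that the codomain is a collection of rough objects on the same universe $S$, and that $\xi_i$ is a well-defined partial function between them. The domain and codomain are fixed once and for all. The set $\mathcal{A}_\tau$ of equation \textsf{(1)} consists exactly of the lower and upper approximations of subsets of $S$ under operators $l,u$ constrained by \textsf{int-cl}--\textsf{u-top}, so it is a legitimate collection of approximations of the type those axioms determine. The set $E_1$ of \textsf{(E1)} is, as already noted in the text, a collection of rough objects on $S$: each pair $(c^{l},c^{u})$ satisfies $c^{l}\subseteq c^{u}$ by \textsf{int-cl}, i.e.\ it is a pair of definite objects with $\pc c^{l} c^{u}$, which is one of the admitted notions of rough object.

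Next I would dispatch the three maps. For $\xi_1$, the rule $\xi_1(a)=(a,b^u)$ lands in $E_1$ precisely when $a$ is a lower approximation, $a=c^{l}$ for some $c\subseteq S$; then choosing $b=c$ gives $\xi_1(a)=(c^{l},c^{u})\in E_1$. Hence $\xi_1$ is defined on $\{a\in\mathcal{A}_\tau : (\exists c)\, a=c^{l}\}\subseteq\mathcal{A}_\tau$ with values in $E_1$, so it is a partial function $\mathcal{A}_\tau\longmapsto E_1$. The argument for $\xi_2$ is the mirror image, with the two coordinates — and the roles of $l$ and $u$ — interchanged, so $\xi_2$ is a partial function defined on the upper approximations. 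For $\xi_3$, every $a\in\mathcal{A}_\tau$ is by definition either $c^{l}$ or $c^{u}$ for some $c\subseteq S$; in the first case $(c^{l},c^{u})\in E_1$ has $a$ as its first coordinate, in the second as its second coordinate, so $\xi_3(a)$ can be taken in $E_1$ in both cases and $\xi_3$ is in fact total on $\mathcal{A}_\tau$.

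The genuinely delicate point — and the step I expect to be the main obstacle — is single-valuedness. The clauses \textsf{xi1}--\textsf{xi3} only specify the value ``for some $b$'' (resp.\ ``for some $(e,f)$''), and a given $a\in\mathcal{A}_\tau$ may be the lower approximation of several distinct sets, or simultaneously a lower and an upper approximation, so these prescriptions do not by themselves pin down a unique image. To close this gap I would either (i) observe that fixing any choice function on the relevant fibres turns each $\xi_i$ into a bona fide partial function, which is all Definition \ref{roran1} demands; or, more in line with the remark following the definitions, (ii) impose the natural extremality constraints hinted at there (e.g.\ requiring the second component to be a minimal upper approximation covering the first, or requiring no approximation to lie strictly between the two components), under which the image is uniquely determined. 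Under any such reading, each $\xi_i$ is a partial function $\mathcal{A}_\tau\longmapsto E_1$ into a set of rough objects, which is exactly the assertion; the remaining checks — that the coordinates produced are genuinely approximations (immediate from their form $c^{l},c^{u}$) and that the target pairs satisfy the rough-object condition (immediate from \textsf{int-cl}) — are routine.
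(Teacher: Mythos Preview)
Your proposal is correct and is precisely the routine verification the paper has in mind: in the paper the theorem is stated immediately after ``It is easy to see that'' with no proof given, so your direct check that each $\xi_i$ is a partial map from $\mathcal{A}_\tau$ into the rough-object set $E_1$ is exactly what is intended. Your discussion of single-valuedness is more careful than anything the paper makes explicit, but it matches the paper's own remark that the clauses \textsf{xi1}--\textsf{xi3} ``may additionally involve constraints on $b$, $e$ and $f$'' to pin down the image.
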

\end{example}
\begin{example}
In the above example, rough inclusion functions, membership, and quality of approximation functions \cite{js09,ag2009} can be used to define CRRF2s. An example is the function $\xi_5$ defined by 
\begin{equation}
 \xi_5(a, b) = \dfrac{Card(b\setminus a)}{Card(b)}
\end{equation}
\end{example}

\subsection{Formalizing the BKM Algorithms}

The ball K-means algorithm can potentially be formalized by rough random functions of type 3 in several ways. For this purpose, one can use a single RRF-3 $\varphi$ and a number of classical lower and upper approximation that describe each update on the original $k$ clusters sequentially or use a sequence of RRF-3s with pairs of classical lower and upper approximations to describe the updates. Therefore, the real problem is of finding and formulating the most appropriate formalization. How does one restrict the choice of approximation operations?  

All crisp clusterings form partitions, and therefore all such clusterings form the granulation of Pawlak rough sets over the universe in question. This is the suggested origin of the \emph{classical and upper approximations}. 

\section{Further Directions}

It might appear to easy to cast the ball K-means and granular ball algorithms in the interactive granular computing perspective. It is already shown that such is not essential. The proposal of interactive granules and related computing (IGrC) is formulated in relation to a certain perception of the basic semantic domain, and is primarily intended to reduce the complexity of decision-making in application contexts \cite{wav2008,skajsd2016}.  Some objects are supposed to be \emph{non-mathematical objects} at a level of discourse, and possess some properties of granularity. The use of complex granule (c-granule) comprising abstract objects, physical objects, as well as objects linking abstract and physical objects, by itself, and their rule-based approach apparently constrains the authors  to that view. States of c-granules are represented by networks of informational granules (ic-granules) linking abstract and physical objects. Such c-granules are intended for modeling perceptions of physical processes in the real world. However, the mathematical approach to such cases is through improved sequences of models, and objects, and through better choice of semantic domains. Data drives nothing, it is for us to invent models that make any driving to be possible at all.

The obvious idea of replacing the hyper-sphere with a smooth hypersurface is possible in theory, and justifiable if the geometry is relevant. However, the computational complexity may increase substantially. Actually, no hyperspheres or balls are used in the both the BKM and granular ball algorithms. It is only in the imagination of the authors. If the shapes generated by points are really of interest then other metrics and the Hausdorff-Gromov distance \cite{fm2013} may be used painfully. The possible mathematical generalization of the proposed method requires justification in applied problems. The geometrical shape of granules typically matter in the domain of topological data analysis \cite{kimm23}, spatial mereology \cite{ham2017} and near sets \cite{jfp}. Such approaches have steep requirements on the domain for easier computing. 

In future studies, existential granules will be explored in greater depth.

\bibliographystyle{splncs04.bst}
\bibliography{algrough23+}

\begin{thebibliography}{10}
\providecommand{\url}[1]{\texttt{#1}}
\providecommand{\urlprefix}{URL }
\providecommand{\doi}[1]{https://doi.org/#1}

\bibitem{lbg1994}
Beach, L., Braun, G.: Laboratory studies of subjective probability: a status
  report. In: Wright, G., Ayton, P. (eds.) Subjective Probability, pp.
  107--128. John Wiley (1994)

\bibitem{bosm2017}
Bogachev, V.I., Smolyanov, O.G.: Topological Vector Spaces and Their
  Applications. Springer Monographs in Mathematics, Springer (2017)

\bibitem{ham2017}
Burkhardt, H., Seibt, J., Imaguire, G., Gerogiorgakis, S. (eds.): {Handbook of
  Mereology}. Philosophia Verlag, Germany (2017)

\bibitem{bu}
Burmeister, P.: {A Model-Theoretic Oriented Approach to Partial Algebras}.
  Akademie-Verlag (1986, 2002)

\bibitem{cwyml2021}
Chen, Y., Wang, P., Yang, X., Mi, J., Liu, D.: Granular ball guided selector
  for attribute reduction. Knowledge-Based Systems  \textbf{229} (2021).
  \doi{10.1016/j.knosys.2021.107326}

\bibitem{mgsn1973}
Gagrat, M., Naimpally, S.: Proximity approach to semi-metric and developable
  spaces. Pacific Journal of Mathematics  \textbf{44}(1),  93--105 (1973)

\bibitem{ag2009}
Gomolinska, A.: {Rough Approximation Based on Weak q-RIFs}. Transactions on
  Rough Sets  \textbf{X},  117--135 (2009)

\bibitem{jpzy2023}
Ji, X., Peng, J., Zhao, P., Yao, S.: Extended rough sets model based on fuzzy
  granular ball and its attribute reduction. Information Sciences  (2023).
  \doi{10.1016/j.ins.2023.119071}

\bibitem{kimm23}
Kim, W., Memoli, F.: Persistence over posets. Notices of the American
  Mathematical Society  \textbf{2761},  1214--1224 (September 2023).
  \doi{10.1090/noti2761}

\bibitem{ank1986}
Kolmogorov, A.N.: On the logical foundations of probability theory. In:
  Shiryayev, A.N. (ed.) Selected Works of A. N. Kolmogorov, vol.~2, chap.~53,
  pp. 515--519. Kluwer Academic, Nauka (1986)

\bibitem{tyl}
Lin, T.Y.: {Granular Computing-1: The Concept of Granulation and Its Formal
  Model}. Int. J. Granular Computing, Rough Sets and Int Systems
  \textbf{1}(1),  21--42 (2009)

\bibitem{bliu2004}
Liu, B.: Uncertainty Theory, Studies in Fuzziness and Soft Computing, vol.~154.
  Springer (2004)

\bibitem{gll2006}
Liu, G.: {The Axiomatization of The Rough Set Upper Approximation Operations}.
  Fundamenta Informaticae  \textbf{69}(23),  331--342 (2006)

\bibitem{lj}
Ljapin, E.S.: {Partial Algebras and Their Applications}. Academic, Kluwer
  (1996)

\bibitem{am240}
Mani, A.: {Dialectics of Counting and The Mathematics of Vagueness}.
  Transactions on Rough Sets  \textbf{XV}(LNCS 7255),  122--180 (2012)

\bibitem{am501}
Mani, A.: {Algebraic Methods for Granular Rough Sets}. In: Mani, A.,
  D{\"u}ntsch, I., Cattaneo, G. (eds.) {Algebraic Methods in General Rough
  Sets}, pp. 157--336. {Trends in Mathematics}, Birkhauser Basel (2018)

\bibitem{am5019}
Mani, A.: {Representation, Duality and Beyond}. In: Mani, A., D{\"u}ntsch, I.,
  Cattaneo, G. (eds.) {Algebraic Methods in General Rough Sets}, pp. 459--552.
  {Trends in Mathematics}, Birkhauser Basel (2018)

\bibitem{am5586}
Mani, A.: {Comparative Approaches to Granularity in General Rough Sets}. In:
  Bello, R., et~al. (eds.) {IJCRS 2020}, {LNAI}, vol. 12179, pp. 500--518.
  Springer (2020)

\bibitem{am23b}
Mani, A.: Rough randomness and its application. Journal of the Calcutta
  Mathematical Society pp. 1--15 (2023). \doi{10.5281/zenodo.7762335},
  \url{https://zenodo.org/record/7762335}

\bibitem{amedit}
Mani, A., D{\"u}ntsch, I., Cattaneo, G. (eds.): {Algebraic Methods in General
  Rough Sets}. {Trends in Mathematics}, Birkhauser Basel (2018).
  \doi{10.1007/978-3-030-01162-8}

\bibitem{am23f}
Mani, A., Mitra, S.: Large minded reasoners for soft and hard cluster
  validation –some directions. Annals of Computer and Information Sciences,
  PTI pp. 1--16 (2023)

\bibitem{fm2013}
Memoli, F.: The gromov-hausdorff distance: a brief tutorial on some of its
  quantitative aspects. Actes des rencontres du C.I.R.M.  \textbf{3}(1),
  89--96 (2013)

\bibitem{ppm2}
Pagliani, P., Chakraborty, M.: {A Geometry of Approximation: Rough Set Theory:
  Logic, Algebra and Topology of Conceptual Patterns}. Springer, Berlin (2008)

\bibitem{wav2008}
Pedrycz, W., Skowron, A., Kreinovich, V.: Handbook of Granular Computing. John
  Wiley (2008)

\bibitem{pwxwc2022}
Peng, X., Wang, P., Xia, S., Wang, C., Chen, W.: Vpgb: A granular-ball based
  model for attribute reduction and classification with label noise.
  Information Sciences  \textbf{611},  504--521 (2022).
  \doi{10.1016/j.ins.2022.08.066}

\bibitem{jfp}
Peters, J.F.: {Topology of Digital Images-Visual Pattern Discovery in Proximity
  Spaces}. {Intelligent Systems Reference Library, Volume 63}, Springer (2014).
  \doi{10.1007/978-3-642-53845-2}

\bibitem{qxhq2023}
Qian, W., Xu, F., Huang, J., Qian, J.: A novel granular ball computing-based
  fuzzy rough set for feature selection in label distribution learning.
  Knowledge-Based Systems  \textbf{278} (2023).
  \doi{10.1016/j.knosys.2023.110898}

\bibitem{skajsd2016}
Skowron, A., Jankowski, A., Dutta, S.: {Interactive granular computing}.
  Granular Computing  \textbf{1}(2),  95--113 (2016)

\bibitem{js09}
Stepaniuk, J.: {Rough-Granular Computing in Knowledge Discovery and Data
  Mining}. {Studies in Computational Intelligence,Volume 152}, Springer-Verlag
  (2009). \doi{10.1007/978-3-540-70801-8}

\bibitem{aw1978}
Wilansky, A.: Modern Methods in Topological Vector Spaces. McGraw-Hill, New
  York (1978)

\bibitem{xdwgg2022}
Xia, S., Dai, X., Wang, G., Gao, X., Giem, E.: An efficient and adaptive
  granular-ball generation method in classification problem. Arxiv  (2022)

\bibitem{xldw2019}
Xia, S., Liu, Y., Ding, X., Wang, G., Yu, H., Luo, Y.: Granular ball computing
  classifiers for efficient, scalable and robust learning. Information Sciences
   \textbf{483},  136--152 (2019)

\bibitem{ballk20}
Xia, S., Peng, D., Meng, D., Zhang, C., Wang, G., Giem, E., Wei, W., Chen, Z.:
  A fast adaptive k-means algorithm. IEE Transactions on Pattern Analysis and
  Machine Intelligence pp. 1--13 (July 2020). \doi{10.1109/TPAMI.2020.3008694}

\bibitem{ya01}
Yao, Y.Y.: {Information Granulation and Rough Set Approximation}. Int. J. of
  Intelligent Systems  \textbf{16},  87--104 (2001)

\end{thebibliography}

\end{document}